\newtheorem{theorem}{Theorem} 
\newtheorem{lemma}{Lemma}
\newtheorem{cor}{Corollary}
\author{
Nathan~Davidov,     
Amanda~Hernandez,
Justin~Jian,        
Patrick~McKenna,
K.A.~Medlin, 
Roadra~Mojumder,
Megan~Owen,         
Andrew~Quijano,     
Amanda~Rodriguez,   
Katherine~St.~John, 
Katherine~Thai,     
Meliza~Uraga        

\IEEEcompsocitemizethanks{
\IEEEcompsocthanksitem N.~Davidov \& R.~Mojumder: Department of Mathematics, Hunter College, City University of New York (CUNY), New York, NY, USA, \texttt{nd2017@hunter.cuny.edu} \& 
\texttt{mroadra@gmail.com}.
\IEEEcompsocthanksitem
A.~Hernandez: University of Connecticut, Bridgeport, CT, USA,
\texttt{amanda.hernandez@uconn.edu}.
\IEEEcompsocthanksitem
J.~Jian:  
  \texttt{justin.jian@nyu.edu}
\IEEEcompsocthanksitem
P.~McKenna:
\texttt{patrick.mckenna@student.fairfield.edu}
\IEEEcompsocthanksitem
K.~Medlin, Department of Mathematics, University of North Carolina, Chapel Hill, NC, USA,
\texttt{kmedlin@unc.edu}.
\IEEEcompsocthanksitem M.~Owen: Department of Mathematics, Lehman College, CUNY, Bronx, NY, USA,
\texttt{megan.owen@lehman.cuny.edu}.
\IEEEcompsocthanksitem
A.~Quijano: New York University, New York, NY, USA,
 \texttt{Andrew.Quijano@columbia.edu}.
\IEEEcompsocthanksitem A.~Rodriguez: 
\texttt{rodriguezamanda214@gmail.com}.
\IEEEcompsocthanksitem
K.~St.~John: ACM Member,
Department of Computer Science, Hunter College, CUNY \& Invertebrate Zoology, American Museum of Natural History, New York, NY, USA,
\texttt{katherine.stjohn@hunter.cuny.edu}.
\IEEEcompsocthanksitem
K.~Thai:
College of Information and Computer Sciences, University of Massachusetts, Amherst, MA, USA
  \texttt{kbthai@umass.edu}.
\IEEEcompsocthanksitem
M.~Uraga:
  \texttt{melizaura3050@gmail.com}.
\IEEEcompsocthanksitem
This work was partially supported by the US National Science Foundation Research Experience for Undergraduates Program, Grant \#1461094, CAREER grant DMS-1847271, and grants from the Simons Foundation.\protect\\
}}
\title{{Maximum Covering Subtrees for Phylogenetic Networks}}
\begin{document}

\maketitle

\begin{abstract}
Tree-based phylogenetic networks, which may be roughly defined as leaf-labeled networks built by adding arcs only between the original tree edges, have elegant properties for modeling evolutionary histories.  We answer an open question of Francis, Semple, and Steel about the complexity of determining how far a phylogenetic network is from being tree-based, including non-binary phylogenetic networks. We show that finding a phylogenetic tree covering the maximum number of nodes in a phylogenetic network can be computed in polynomial time via an encoding into a minimum-cost flow problem.
\end{abstract}

\begin{IEEEkeywords}
Algorithms, complexity, phylogenetic networks, trees, {flow networks}.
\end{IEEEkeywords}

\section{Introduction}

A fundamental question in biology is to determine the evolutionary history of a set of species.  Many evolutionary histories can be captured by trees with leaves labeled by the species in question, but in some cases, a tree structure does not capture evolution's complexity.  Instead, phylogenetic networks better models the multiple paths in the history. 
Many questions are computationally hard on the class made up of all phylogenetic networks, including determining if a network displays a fixed tree \cite{kanj2008} and counting the number of trees displayed by the network \cite{linz2013}.  There has been much work to find restricted classes of phylogenetic networks that are both complex enough to model the underlying biological phenomena and small enough to allow tractable computations \cite{vanIersel2010}.

Francis and Steel \cite{francis2015} developed a natural class of phylogenetic networks, called \emph{tree-based phylogenetic networks} or 
\emph{tree-based networks}, that consists of networks that can be represented as a tree with additional edges added only between the original tree edges.  
Tree-based phylogenetic networks capture the idea that a phylogenetic network can have an underlying tree as its ``backbone.''
They gave a polynomial algorithm to determine if a phylogenetic network is tree-based.  Anaya {\em et al.} \cite{anaya2016} showed that deciding if a phylogenetic network is based on a specific tree is NP-hard.  
Semple \cite{semple2016} showed that the class of tree-child networks is precisely the class of tree-based phylogenetic networks with the property that every embedded phylogenetic tree is a base tree.  Zhang \cite{zhang2016} classified tree-based phylogenetic networks in terms of matchings on a bipartite graph, which was later used by Francis {\em et al.} \cite{francis2018} to calculate some alternative measures of how far a phylogenetic network is from being tree-based.  The idea of tree-based phylogenetic networks has been extended to non-binary phylogenetic networks by Jetten and van Iersel \cite{jetten2016} with further work by Pons et al. \cite{pons2019}; to unrooted binary phylogenetic networks by Francis et al. \cite{francis2017tree}; and to unrooted, non-binary phylogenetic networks by Hendriksen \cite{hendriksen2018}.  

\begin{figure}
    \centering

    \includegraphics[height=1.325in]{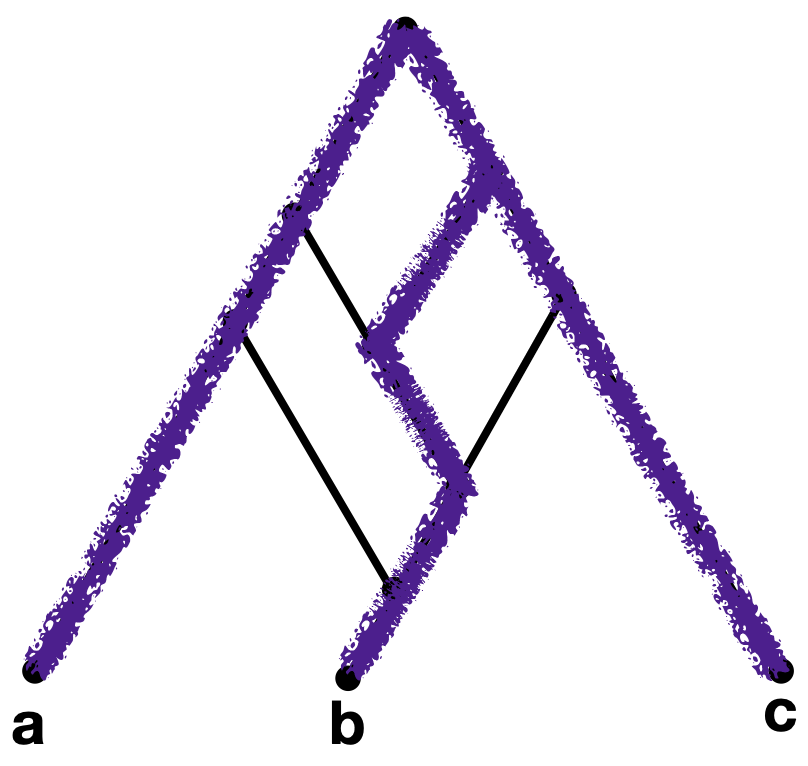} 
    
    \caption{A tree-based network.  The root is drawn at the top, and all edges are directed from top to bottom.  The base tree is indicated by shaded lines in the network.  }
    \label{fig:treeBased}
\end{figure}

Given the properties of the class of tree-based phylogenetic networks, a natural question is how far is a given network from being tree-based.
Francis {\em et al.} \cite{francis2018} proposed several ways to measure how much a phylogenetic network deviates from being tree-based. They showed that finding the minimal number of leaves that need to be attached to a phylogenetic network to make it tree-based can be computed in polynomial time. They left as an open question what is the fewest number of vertices that must be absent
{to yield a tree-based network}. Our contribution is to show that this latter measure can be computed in polynomial time via a network flow approach.  This approach works for both the traditional binary phylogenetic networks, as well as networks where higher degree nodes (i.e. non-binary networks) are allowed.  Finding a tree covering the maximum number of nodes in a phylogenetic network is equivalent to finding a set of disjoint paths, each ending in a network leaf, that cover as many nodes as possible {\cite{francis2018}}. This problem can be solved by encoding the phylogenetic network into a flow network with costs associated to each arc.  We use a standard node splitting approach \cite{ahuja1993}, which has also been used to solve the minimum path cover problem on directed acyclic graphs, to encode the problem as a minimum-cost flow problem.

Solving the minimum-cost flow problem on this network can be {computed} in polynomial time, and we show that this optimal flow yields a maximum covering subtree of the original network {that can be computed in polynomial time}.

Our problem falls into an interesting middle ground of complexity results about covering paths and trees.  Fortune {\em et al.} \cite{fortune1980} showed that finding a set of $k$ disjoint paths that cover the vertices of a directed graph is NP-complete.  Perl and Shiloach \cite{perl1978} proved that it can be computed in linear time when $k=2$.  Seminal work of Robertson and Seymour showed that finding $k$ vertex disjoint paths for undirected graphs is solvable in polynomial time \cite{robertson1995}.  Efficiently computing vertex-disjoint paths and out-branchings (spanning trees in a directed graph that start at a specified root) has much intriguing work 
(e.g.~\cite{alon2009,eppstein1998,gutin2009,suurballe1984,tholey2005}) that has combined classic results (such as flow across bipartite graphs of matchings) with elegant data structures.
{T}he closely related question of partitioning a directed acyclic graph (DAG) into as few strongly connected components possible is {NP}-hard \cite{leskovec2009}.

\section{Background}

Our basic structures are leaf-labeled phylogenetic trees and networks.
A {\em phylogenetic network} or {\em network} $N$ is a connected, directed, acyclic graph denoted by the tuple $(V, A, L, \rho)$ where $V$ is a set of {\em vertices}, 
$A$ is the set of {\em arcs}, or directed edges, 
$L$ is the set of labels on the {\em leaves} (vertices with out-degree $0$ {and in-degree $1$}), and 
$\rho$ is the root (the unique vertex with in-degree $0$).  We let $k=|L|$, $n = |V|$, $m = |A|$.
The remaining vertices can be placed in two groups:  {\em tree vertices}, which have in-degree $1$ and out-degree greater than or equal to $2$; and {\em reticulation vertices}, which have in-degree greater than or equal to $2$ and out-degree $1$.  If the in-degrees and out-degrees are bounded by $2$, we refer to the network as {\em binary}.  If not, the network is called {\em non-binary}.  If a phylogenetic network has the further restriction that all vertices have in-degree 1 (or 0 in the case of the root), it is called a {\em tree}. {We focus on subtrees of networks that contain all the leaves and the root and refer to these as {\em covering subtrees}. We note that if all vertices with in-degree $1$ and out-degree $1$ of a tree are surpressed, the resulting structure is called a phylogenetic tree.  }

\subsection{Tree-Based Networks}

Following {\cite{francis2015} and \cite{jetten2016}}, a phylogenetic network $N = (V, A, L, \rho)$ is {\em tree-based} if there exists a tree $T = (V', A', L, \rho)$ where the root and leaf set of $T$ are identical to those of $N$, the vertices are identical (i.e. $V' = V$), and arcs are subsets (i.e. $A' \subseteq A$). 
{$T$ is referred to as the {\em base tree} for the network $N$.} 
{Alternatively, given a phylogenetic tree $T$, we can add additional vertices to its edges and arcs between these new vertices to yield the network $N$}.  Note that some of the vertices in the tree may only have degree two within the tree, since the third in-coming or out-going arc is part of the network, but not the tree.  
Figure~\ref{fig:treeBased} shows a tree-based network.  By convention, we draw the root (the distinguished node with in-degree 0) at the top of the figure and directional arrows are left out since all edges are directed downwards.  {A} base tree is indicated by shaded lines in the figure.  We will also look at networks that are not tree-based (see Figure~\ref{fig:encoding}{(a)}).  For those, by definition, there is no base tree, but there are subtrees that have the same root and leaf set and contain a subset of the vertices and directed edges.
Formally, for a network, $N = (V, A, L, \rho)$, we define a {\em covering subtree} as 
a connected tree $T = (V', A', L, \rho)$ that contains the root and all of the leaves of the network $N$, subsets of the vertices and directed edges.  We emphasize that for $T$ to be a covering tree of $N$, all leaves of $N$ must also be leaves of $T$.  A {\em maximum covering subtree} of a network $N$ is a covering subtree of $N$ that contains the most vertices of $N$, or equivalently, maximizes $|V'|$.
Figure~\ref{fig:encoding}(a) shows a phylogenetic network that is not tree-based.  A maximum covering subtree of this network is indicated by the shaded lines in Figure~\ref{fig:encoding}(d).

As an open question, Francis et al. \cite{francis2018} asked what is the complexity {for computing} the following measure:
Given a network $N$ and a covering subtree $T$ of $N$,
let $\eta(N, T)$ be the number of vertices in $N$ that are not covered by $T$.  That is, 
$$
    \eta({\cal N}, T) = |V({\cal N})\setminus V(T)|
$$

We let $\eta({\cal N})$ be the minimal number of uncovered vertices over all covering subtrees of ${\cal N}$
$$
    \eta(N) = \min_{T} \eta(N, T)
$$
We note that 
$\eta(N) = 0$ when $N$ is a tree-based network.
As with the other measures proposed by Francis {\em et al.} \cite{francis2018}, we show that this one {can be computed in} polynomial time.  We formalize the problem as:

\bigskip
\noindent
\textsc{\bf Maximum Covering Subtree (\textsc{Max-CST})}

\noindent
\textsc{Instance:}  A leaf-labeled network $N = (V, A, L, \rho)$ with vertices $V$, directed edges $A$, leaves $L$, and root $\rho$.

\noindent
\textsc{Question:} What is the minimum number of vertices {of $N$} not covered by a covering subtree of $N$, $\eta(N) = \min_T |V(N)\setminus V(T)|$, where $T$ ranges over all covering subtrees of $N$?

Note that in the case of unrooted phylogenetic networks, determining if the network is tree-based is NP-complete \cite{francis2017tree}.  Since finding a maximum covering subtree for an unrooted phylogenetic network would also determine if the network is tree-based, {the unrooted variant of MAX-CST} must be NP-hard.

\subsection{Network Flows}

To find the \textsc{Max-CST} for a phylogenetic network, we use the well-studied approach of network flows  \cite{ahuja1993}.
{We give a brief overview of flow networks, and, in the next section, we outline our encoding of phylogenetic networks into flow networks }.
We represent a flow network as $F = (V,A,s,t,c,w)$ where $V$ is a set of vertices, { $A = V\times V$} is a set of directed arcs, $s\in V$ is the source of the network and has in-degree $0$, $t\in V$ is the sink of the network and has out-degree $0$, $c:A\rightarrow \mathbb{R}^+$ is a function that assigns capacities to each edge, and $w:A\rightarrow \mathbb{R}$ is a function that assigns costs to each edge.
The general goal is to maximize the ``flow'' through the network where the flow on any arc cannot exceed its capacity and flow into each vertex matches the flow out of the vertex {(except for the source, $s$, and sink, $t$)}.    The flow across a network equals the flow into the sink.  To find the \textsc{Max-CST}, 
we use a standard variation called the \emph{minimum-cost flow problem}, in which the flow problem is modified {to} include a cost on each arc and the solution is a minimum-cost flow.  Specifically, for each arc, {$u\rightarrow v$}, we are given a cost, $w(u,v) \in \mathbb{R}$.  If the flow across the edge is $f(u,v)$ {in the network}, the cost of the flow is:
$$
    \sum_{u\rightarrow v \in A} w(u,v)\cdot f(u,v)
$$
We note that the costs of edges can be negative. {While flow networks can have varied capacity on arcs, each arc in our networks has the capacity of either $0$ or $1$.  The former, capacity set to $0$, is a technicality since by definition all possible arcs are included in a network, but those that are not used have zero capacity.  The latter, capacity set to $1$, is called `unit capacity'}.  
A standard framework for approaching network flow problems is the following: given a flow on a network, an augmenting path is an additional path with positive flow that can be added to that flow.  An additional ``residual'' network that keeps track of the current capacities of the flow is employed to make computing augmenting paths easier.  To find the minimum-cost augmenting paths, we use an algorithm that can handle negative weights, such as Bellman-Ford method (\cite{bellman1958,ford1956}) which runs in $O(|A|\cdot|V|) = O(mn)$.  To compute the minimum-cost flow, we follow the standard approach of incorporating the minimum-cost augmenting path from the residual network into the flow, recomputing the residual network, and repeating until no additional augmenting paths exist.  

Given the importance of the minimum-cost flow problem, there are numerous approaches to solving it, including combinatorial algorithms and linear programming \cite{ahuja1993}.  This problem continues to be an active area of research, including when restricted to unit capacities, as in our problem \cite{cohen2017negative, goldberg2017minimum, liu2020faster}.

\begin{figure*}
    \centering
    \begin{tabular}{ccccccc}
    \\
    \includegraphics[height=2.5in]{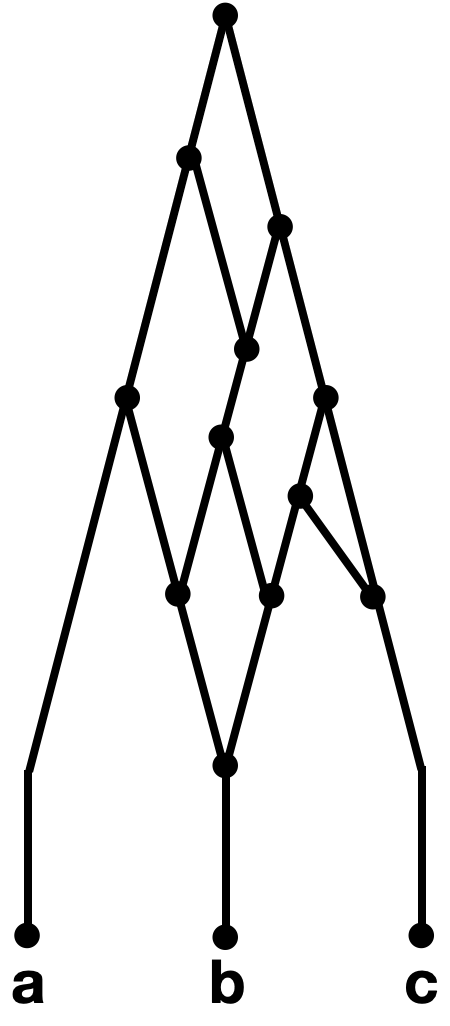}&
    \mbox{\hspace{.25in}}&
    \includegraphics[height=2.75in]{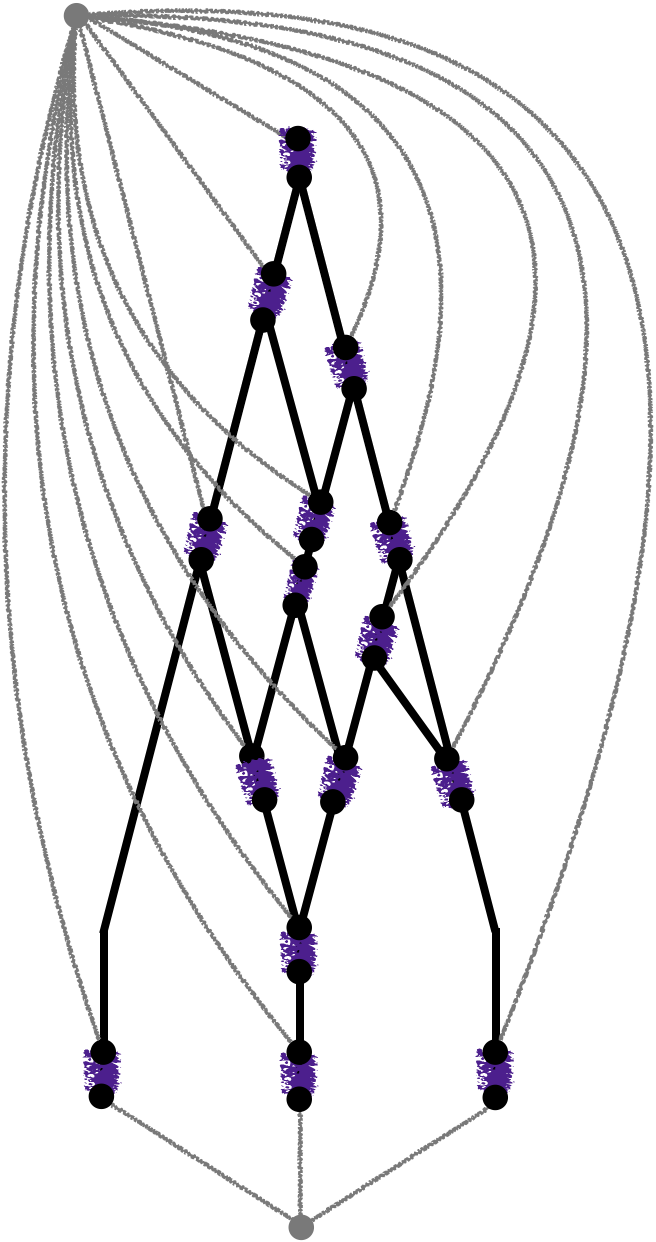}&
    \mbox{\hspace{.25in}}&
    \includegraphics[height=2.75in]{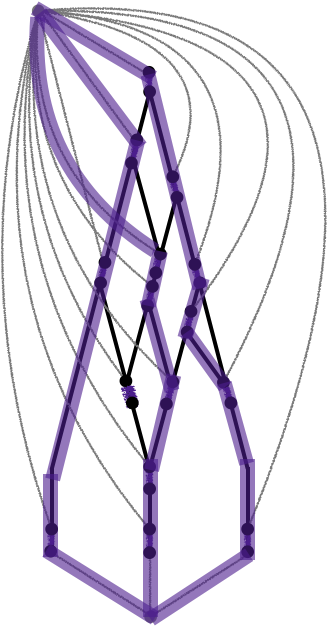}&
    \mbox{\hspace{.25in}}&
    \includegraphics[height=2.5in]{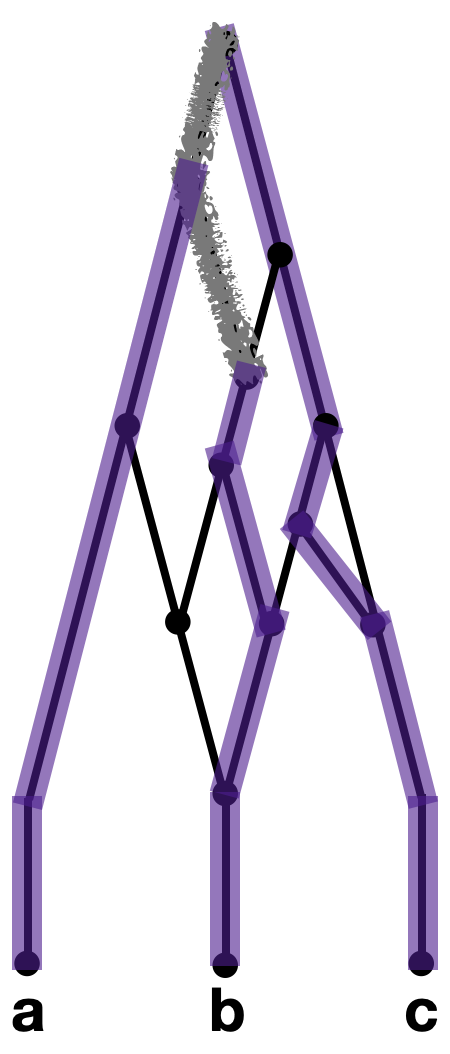}\\
    \\
    (a) && (b) && (c) && (d)\\
    \end{tabular}

    \caption{(a) A phylogenetic network and (b)  its corresponding flow network.  Each vertex $v$ in the network has been replaced by two vertices $(v_{in},v_{out})$ and connected with a ragged-edged (purple) arc weighted with cost $-1$. The remaining gray and black arcs have cost $0$.  {Only arcs with non-zero capacity are shown; all shown have capacity $1$.}  (c) A minimum-cost flow highlighted in the network and (d) the corresponding maximal covering subtree highlighted in the phylogenetic network.  The highlighted (purple) edges are directly copied from the flow network and are vertex disjoint paths ending in the $k$ leaves. 
    An additional $k-1$ rough-shaded (gray) edges connect the paths into a maximal covering subtree.
    }
\label{fig:encoding}
\end{figure*}

\section{Computing \textsc{Max-CST}s}
\label{sec:encoding}

Our approach breaks down into three steps: encoding our problem as a network flow, finding the minimum-cost flow for that network, and building a tree from the resulting flow.  We detail the steps below and prove correctness in Section~\ref{sec:proofs}.

\vspace{.1in}
\noindent
\textbf{Encoding as a flow network:} To use a flow network to find maximal paths, we follow the classic set up with a slight twist. 
Given a phylogenetic network, $N = (V, A, L, \rho)$, we build a flow network $F = (V',A',s,t,c,w)$.
First, we add an additional node, $s$, to serve as the source, and add arcs from $s$ to every node of the network.  We next add an additional node, $t$, to serve as the {sink}.  We add arcs from each leaf (nodes with out-degree $0$) to $t$.
Next, each vertex $v\in V$ from the original network is replaced with a pair $(v_{in}, v_{out})$ that are connected by an arc from $v_{in}$ to $v_{out}$.  All incoming arcs to $v$ are redirected to $v_{in}$.  All outgoing arcs from $v$ are redirected to start from $v_{out}$.  We let $A'$ be the arcs described above, and 
$V' = \bigcup_{v\in V} \{v_{in},v_{out} \} \cup \{s,t\}$.  The capacity function, $c: A' \rightarrow \mathbb{R}^+$ is 1.  The cost function $w: A' \rightarrow \mathbb{R}$ is $-1$ for the arcs $v_{in} \rightarrow v_{out}$, $v\in V$, and 0 for all other arcs.  
The new network is a directed acyclic graph with costs on the edges of either $0$ or $-1$. 
Figure~\ref{fig:encoding}(a) shows a phylogenetic network and Figure~\ref{fig:encoding}(b) shows the corresponding flow network.  Two additional vertices for the source and {sink} are added at the top and bottom of the figure, respectively.  The original network arcs are in solid thin lines.  The additional arcs, $v_{in}\rightarrow v_{out}$, are rough-edged, and the additional arcs from the source to all $v_{in}$ vertices and the arcs from all $v_{out}$ of the leaves to the target are in gray.  

We note that the encoding takes linear time in the number of arcs of the phylogenetic network, so, is $O(m)$.

\vspace{.1in}
\noindent
\textbf{Computing Minimum-Cost Flow:}
We employ the standard approach to finding the minimum-cost  flow across the encoded network \cite{ahuja1993} that sets the initial flow across the network to $0$ and then repeatedly incorporates the minimum-cost augmenting path until no more can be added.  
For all such networks, there is a minimal-cost  solution with integral flows across the edges, and we will use this fact to prove correctness of our approach below.  
Since the costs on arcs can be negative, the approach used to find minimum-cost augmenting paths must be able to handle negative costs.  One such method is the Bellman-Ford approach which runs in $O(nm)$.  By construction of the encoding, there are at most $k$ disjoint paths between the source and the sink (bounded by the number of arcs that are directed into the sink).  The approach takes $O(n^2m)$ time to give a minimal-cost flow (highlighted in Figure~\ref{fig:encoding}(c)).

\vspace{.1in}
\noindent
\textbf{Building the Tree from the Flow:}  The encoded network is constructed to yield, as the minimal-cost flow result, a solution of $k$ disjoint paths, $p_1,\ldots,p_k$, that maximally cover the original network (see \cite{ahuja1993} for proofs).  We will show that $k = |L|$, the number of leaves in the network.  Without loss of generality, assume that $p_1$ contains the root of the network.  Let $r_i$ be the first vertex (i.e. the vertex with in-degree $0$) of path $p_i$, $i=1,\ldots,k$.  To build a covering subtree from these disjoint paths, we show in Section~\ref{sec:proofs} that for each path, $p_i$, $i=2,\ldots,k$, a parent of $r_i$ is a vertex in another path.  We then add an arc from a parent of $r_i$ directed towards $r_i$.  We note that $r_i$ could be a reticulation node and have two parents, but choosing either will result in a tree.  Figure~\ref{fig:encoding}(d) shows the $k$ disjoint paths  and the $k-1$ additional edges added to form a tree.  

Building the tree from the flow network takes linear time in the 
{number of vertices and number of the paths, or $O(n+m) = O(m)$}.

\section{Proofs}
\label{sec:proofs}

For this section, we assume that $N = (V, A, L, \rho)$ is a phylogenetic network that has been encoded into a flow network $F = (V',A',s,t,c,w)$ via the procedure in Section~\ref{sec:encoding}.  Let $f$ be a minimal-cost flow on $F$.  Let $f|_N$ be the restriction of the flow $f$ to the original network, in that:
$$
    f|_N = \{u\rightarrow v |  
        \mbox{$f(u_{out},v_{in}) > 0$ and $u\rightarrow v \in A$} \}
        \subseteq A.
$$
We first {note that for networks with unit capacities, there exists a minimal cost flow $f$ with integral values \cite{ahuja1993}.}
We show that the restriction of the minimum-cost flow to $N$, $f|_N$ consists of disjoint paths of $N$, each containing a leaf of the original network:

\begin{lemma}
$f|_N$ consists of $k$ disjoint paths, $p_1,\ldots,p_k$ of $N$.  
Further, each path contains exactly one leaf of $L$.
\label{lem:paths}
\end{lemma}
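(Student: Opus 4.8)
The plan is to read off the structure of $f|_N$ from flow conservation together with the unit capacities, and then to use the minimality of the cost of $f$ to force every leaf to carry flow. First I would invoke the integrality already noted in the excerpt: since every arc of $F$ has capacity $0$ or $1$, there is a minimum-cost flow $f$ with $f(a)\in\{0,1\}$ for every arc $a$. The heart of the local analysis is the split arc $v_{in}\rightarrow v_{out}$, which has capacity $1$. By conservation at $v_{out}$, the total flow leaving $v_{out}$ equals the flow entering it, and the only arc into $v_{out}$ is $v_{in}\rightarrow v_{out}$; hence at most one unit leaves $v_{out}$, so at most one arc $v_{out}\rightarrow w_{in}$ carries flow. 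Symmetrically, the only arc out of $v_{in}$ is the split arc, so at most one unit enters $v_{in}$, and since the arc $s\rightarrow v_{in}$ is not of the form $u\rightarrow v$ with $u\in V$ it is excluded from $f|_N$, leaving at most one contributing arc $u_{out}\rightarrow v_{in}$. Translating back to $N$, every vertex has in-degree at most $1$ and out-degree at most $1$ in $f|_N$.

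A directed graph in which every vertex has in- and out-degree at most $1$ is a disjoint union of simple paths and cycles; since $N$ is acyclic, $f|_N$ contains no cycles, so it is a disjoint union of simple paths $p_1,\ldots,p_k$. Next I would identify the endpoints. Consider the last vertex $v$ of a path, i.e., a covered vertex (one with $f(v_{in},v_{out})=1$) having out-degree $0$ in $f|_N$. Conservation at $v_{out}$ forces one unit to leave $v_{out}$; if $v$ were not a leaf, this unit would have to travel along some arc $v_{out}\rightarrow w_{in}$, contradicting out-degree $0$. The only remaining outlet is $v_{out}\rightarrow t$, which exists precisely when $v$ is a leaf. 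Hence every path ends at a leaf, and since leaves have out-degree $0$ in $N$ they can occur only as path endpoints, so each path contains at most one leaf.

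It remains to show there are exactly $k=|L|$ paths and that each contains a leaf, which is where optimality enters and is the step I expect to be the crux. I would argue that in a minimum-cost flow every leaf is covered: if some leaf $\ell$ had $f(\ell_{in},\ell_{out})=0$, then by conservation the arcs $s\rightarrow \ell_{in}$, $\ell_{in}\rightarrow \ell_{out}$, and $\ell_{out}\rightarrow t$ all carry $0$ flow, so the path $s\rightarrow \ell_{in}\rightarrow \ell_{out}\rightarrow t$ is a valid unit augmentation whose only negative-cost arc is the split arc; adding it decreases the total cost by $1$, contradicting minimality. Thus each of the $|L|$ leaves is covered and is therefore the endpoint of exactly one path, while conversely each path has exactly one endpoint, which we showed is a leaf. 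This bijection between paths and leaves yields $k=|L|$ paths, each containing exactly one leaf. The main obstacle is precisely this optimality argument — verifying that augmenting along an uncovered leaf respects all capacities and strictly lowers the cost — since the degree bookkeeping and the acyclicity that rule out cycles are routine once conservation is applied arc by arc.
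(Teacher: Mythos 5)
Your proof is correct and follows essentially the same approach as the paper: unit capacity on the split arcs $v_{in}\rightarrow v_{out}$ together with integrality and conservation forces in- and out-degree at most one in $f|_N$, and a negative-cost augmenting path $s\rightarrow \ell_{in}\rightarrow \ell_{out}\rightarrow t$ through any uncovered leaf contradicts minimality. You are in fact somewhat more complete than the paper's own proof, which does not explicitly rule out cycles via acyclicity of $N$ or spell out the bijection between paths and leaves that gives exactly $k=|L|$ paths, each ending in a leaf.
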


\begin{proof}
We have that $f$ is a minimal cost flow and $f|_N$ {is the arcs of $f$ that are also arcs of $N$}. 
If $f|_N$ is not the union of disjoint paths, then there exists a vertex, $x$ {that is the starting point of two or more arcs of $f|_N$ or the ending point of two or more arcs of $f|_N$. Assume that the former is true and that there are $k>1$ arcs of $f|_N$ with $x$ as a starting point called $x\rightarrow u_1$, $x\rightarrow u_2,\ldots x\rightarrow u_k$.  Since these arcs are in $f|_N$, each has non-zero flow.}
Since we are assuming that the minimal cost flow is integer-valued, and the capacity of the arc {exiting} $x$ is 1, at most one of these {outgoing} arcs have non-zero flow, contradicting the existence of $x$.  A similar argument applied to latter case.

$f$ must contain all the arcs $v_{in} \rightarrow v_{out} $ for $v \in L$, since if it does not, the flow can be augmented by a path from the source to the missing leaf vertices to the {sink},
$s \rightarrow v_{in} \rightarrow v_{out} \rightarrow t$.  Since all arcs $v_{in} \rightarrow v_{out}$ have cost $-1$, the new flow would have a lower cost, contradicting the minimality of $f$.
\end{proof}

To build the tree from the flow, we show that for each path either the starting vertex of the path is the root of the network, or that the parents of the starting vertex of the path belong to another path:

\begin{lemma}
Let $p$ be a path of $f|_N$ and $r$ be the vertex of $p$ with in-degree $0$ in $p$.  Then either $r = \rho$, the root of $N$, or there exists another path $p'$ that contains a vertex $v$ such that $v\rightarrow r$ is an arc of $N$.
\label{lem:parentPaths}
\end{lemma}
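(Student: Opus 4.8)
The plan is to argue by contradiction using the standard optimality criterion for network flows: a feasible flow is minimum-cost if and only if its residual network contains no negative-cost cycle (see \cite{ahuja1993}). First I would dispose of the trivial case $r = \rho$, so assume $r \neq \rho$. Since $\rho$ is the unique vertex of in-degree $0$ in $N$, the vertex $r$ has at least one parent $v$ with $v \to r \in A$, so the arc $v_{out} \to r_{in}$ exists in $F$. The goal is to show that at least one such parent $v$ lies on a path of $f|_N$ other than $p$.

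Next I would pin down the flow around $r$. Because $r$ lies on $p$, the split arc $r_{in} \to r_{out}$ carries flow $1$; as this is the only arc leaving $r_{in}$, conservation forces exactly one unit into $r_{in}$. Since $r$ has in-degree $0$ within $p$, no arc $u_{out} \to r_{in}$ from a parent $u$ carries flow, so the unit entering $r_{in}$ must arrive on the source arc, giving $f(s, r_{in}) = 1$. Now suppose for contradiction that no parent of $r$ lies on any path, i.e.\ $f(v_{in}, v_{out}) = 0$ for every parent $v$. Fixing such a parent $v$, conservation at $v_{in}$ and at $v_{out}$ forces $f(s, v_{in}) = 0$ and $f(v_{out}, r_{in}) = 0$. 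I would then exhibit the residual cycle $s \to v_{in} \to v_{out} \to r_{in} \to s$, where the final step is the backward residual arc of $s \to r_{in}$, available precisely because $f(s, r_{in}) = 1$. Each of the three forward arcs has positive residual capacity by the zero-flow facts just noted, and the cycle has total cost $0 + (-1) + 0 + 0 = -1 < 0$. This negative cycle contradicts the minimality of $f$, so some parent $v$ of $r$ must satisfy $f(v_{in}, v_{out}) = 1$ and hence lie on a path $p'$.

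Finally I would argue $p' \neq p$: since $N$ is acyclic and $v \to r$ is an arc, $v$ cannot be a descendant of $r$, whereas every vertex of $p$ other than $r$ is reachable from $r$ along $f|_N$; as $v \neq r$ as well, $v \notin p$, so $v$ belongs to a path other than $p$, completing the argument. The main obstacle is the bookkeeping in the residual-cycle construction: verifying that all three forward residual capacities are genuinely positive, which is exactly where the contradiction hypothesis ``$v$ lies on no path'' is consumed via flow conservation, and that the backward arc $r_{in} \to s$ is present. Once the negative cycle is assembled the contradiction is immediate, and the acyclicity step separating $p'$ from $p$ is routine.
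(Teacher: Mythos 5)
Your proof is correct and takes essentially the same route as the paper's: it reroutes the unit of flow entering $r_{in}$ from the source so that it instead passes through an uncovered parent $v$, gaining cost $-1$ and contradicting minimality of $f$. You package this as a negative cycle in the residual network and add an explicit acyclicity argument that $p' \neq p$ --- both points the paper's proof leaves implicit (it says the augmenting path ``replaces'' the arc $s \rightarrow r_{in}$, which is exactly your backward residual arc) --- but the underlying idea is identical.
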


\begin{proof}
If $r = \rho$ we are done, so assume not.  To prove a contradiction, further assume that for $r$, none of its incoming arcs in $N$ are in $f|_N$. So, $f(s,r_{in}) = 1$ and there exists $v \in V$ with $v \rightarrow r \in A$ and $v$ not {a vertex of an arc of} $f|_N$, which implies $f(v_{in},v_{out}) = 0$.  Since the cost, $w(v_{in},v_{out}) = -1$, we can augment the network flow with $s \rightarrow v_{in} \rightarrow v_{out} \rightarrow r_{in}$, replacing the $s \rightarrow r_{in}$ arc.  This new network has lower cost, yielding a contradiction.
\end{proof}

Using Lemma~\ref{lem:parentPaths}, we can connect the disjoint paths of $f|_N$ into a covering subtree of $N$:

\begin{lemma}
There exists a tree $T$ on leaves $L$ that contains all the arcs of $f|_N$.  Further, the vertices of $T$ are exactly {
$\{v | \mbox{ $\exists w$,  $(v_{out},w_{in})\in f|_N$ or 
$(w_{out},v_{in})\in f|_N$}\}$. }
\label{lem:inducedTree}
\end{lemma}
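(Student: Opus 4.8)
The plan is to assemble $T$ from the $k$ vertex-disjoint paths $p_1,\dots,p_k$ of Lemma~\ref{lem:paths} by gluing them with $k-1$ arcs supplied by Lemma~\ref{lem:parentPaths}, chosen so that they run between vertices that already lie on the paths and hence create no new vertex. For each path $p_i$ whose initial vertex $r_i$ differs from the root $\rho$, Lemma~\ref{lem:parentPaths} provides an arc $v_i\to r_i\in A$ whose tail $v_i$ lies on another path; I add exactly one such arc per non-root path (choosing arbitrarily when $r_i$ is a reticulation with two eligible parents), and let $T$ be all arcs of $f|_N$ together with these connecting arcs. Since every connecting arc joins a vertex already on a path to the start of another path, $T$ introduces no vertices beyond those covered by $f|_N$; this immediately yields the claimed vertex set and reduces the lemma to proving that $T$ is in fact a tree.

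To study connectivity and acyclicity I would contract each path to a super-node and consider the auxiliary digraph $H$ on $\{p_1,\dots,p_k\}$ in which the connecting arc $v_i\to r_i$ with $v_i\in p_j$ becomes an edge $p_j\to p_i$. By construction every path not starting at $\rho$ receives exactly one incoming edge in $H$, while a path starting at $\rho$ receives none, so it suffices to show that $H$ is acyclic and has a unique source.

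The crux, and the step I expect to be the main obstacle, is ruling out cycles in $H$, which I would derive from $N$ being a DAG. If $H$ had a directed cycle $q_1\to q_2\to\cdots\to q_\ell\to q_1$, then within each $q_m$ I can travel forward from its start to the vertex feeding $q_{m+1}$ and then cross the connecting arc into the start of $q_{m+1}$; concatenating these segments produces a closed directed walk in $N$, contradicting acyclicity. Hence $H$ is finite and acyclic, so not every node can have positive in-degree, which forces at least one path to start at $\rho$; this simultaneously shows that $\rho$ is covered by $f|_N$ and, by disjointness of the paths, that exactly one path starts at $\rho$. Thus $H$ has a unique source, exactly $k-1$ edges, and no cycles, so following parent pointers from any node reaches the root path and $H$ is a tree.

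Finally I would lift the tree structure of $H$ back to $T$. Each connecting arc enters a path-start, so every non-root vertex of $T$ has in-degree exactly $1$ and $\rho$ has in-degree $0$, meaning no vertex acquires a second parent; counting, $T$ has $n'=\sum_i|p_i|$ vertices and $(n'-k)+(k-1)=n'-1$ arcs, and it is connected because $H$ is. A connected graph on $n'$ vertices with $n'-1$ arcs is a tree, and since each connecting-arc tail has out-degree at least $1$ in $N$ no leaf emits a connecting arc, so every leaf of $L$ retains out-degree $0$ and $T$ is the required tree on leaf set $L$.
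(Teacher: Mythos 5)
Your proof is correct and uses the same construction as the paper: glue the $k$ disjoint paths of Lemma~\ref{lem:paths} by adding, for each non-root path start $r_i$, one arc from a parent on another path as guaranteed by Lemma~\ref{lem:parentPaths}. The only difference is one of rigor: where the paper merely asserts that this gluing ``does not introduce undirected cycles'' and tacitly assumes some path starts at $\rho$, you verify both points by contracting the paths to an auxiliary digraph $H$, showing that a cycle in $H$ would yield a closed directed walk in the acyclic network $N$, and deducing that $H$ has a unique source --- a worthwhile elaboration, but not a different approach.
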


\begin{proof}
By Lemma~\ref{lem:paths}, $f|_N$ consists of $k$ disjoint paths, each containing exactly one leaf of $L$.  By Lemma~\ref{lem:parentPaths}, the start of each path, $r_i$, $i = 1,\ldots,k$, is either the root $\rho$ or we can associate a parent $u_i$ {such that $u_i$ is a vertex of another path $p_j$ , $j\ne i$.}  Since the paths are disjoint, $p_i = \rho$ in only one path.  Without loss of generality, assume that $p_1$ has the root as its starting vertex.  Let the tree $T$ be $f|_N$ with the arcs $u_i \rightarrow r_i$, $i = 2,\ldots,k$ added.  We note that this construction of connecting a path at its start to a tree does not introduce undirected cycles {maintaining} the tree property of the construction.
\end{proof}

To show that the constructed subtree is a maximal covering subtree, we need first to show that:

\begin{lemma}
\label{lemma:backwards}
Every {covering subtree} $T$ in $N$ corresponds to a flow in the flow network $F$ with cost $-|U|$ where $U$ is the set of vertices of $T$.
\end{lemma}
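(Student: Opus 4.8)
The plan is to prove this constructively, essentially reversing the tree-building step of Lemma~\ref{lem:inducedTree}: from a covering subtree $T$ with vertex set $U$, I will exhibit an explicit feasible integral flow on $F$ whose cost is exactly $-|U|$. The first step is to decompose $T$ into vertex-disjoint directed paths, each ending in a leaf. Since $T$ is a rooted tree, for every non-leaf vertex $v$ of $T$ I select a single child $c(v)$ among its children in $T$, and I group each vertex with its chosen successor $c(v)$. This partitions $U$ into maximal downward paths, and because $c(\cdot)$ is defined at every non-leaf vertex, each path continues until it reaches a leaf. Hence the paths are in bijection with the leaves of $T$, yielding exactly $k=|L|$ vertex-disjoint paths $p_1,\ldots,p_k$ whose union is all of $U$.

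Next I would route one unit of flow along each path. For a path $p_i = r_i \to \cdots \to \ell_i$ in $N$ with $\ell_i\in L$, I push a unit of flow along $s \to (r_i)_{in} \to (r_i)_{out} \to \cdots \to (\ell_i)_{in} \to (\ell_i)_{out} \to t$ in $F$, using the arc $v_{in}\to v_{out}$ for each vertex $v$ on $p_i$ and the connecting arc $v_{out}\to w_{in}$ for each consecutive pair $v\to w$ on $p_i$. I then verify that the resulting $f$ is a valid flow. Because the $p_i$ are vertex-disjoint in $N$, no internal arc $v_{in}\to v_{out}$ and no connecting arc is reused, and each source arc $s\to (r_i)_{in}$ and each sink arc $(\ell_i)_{out}\to t$ is used once; so every arc carries flow $0$ or $1$, respecting the unit capacities. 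Conservation holds at each split vertex: a unit enters $v_{in}$ (from $s$ if $v$ begins its path, otherwise from the predecessor's $v'_{out}$), traverses $v_{in}\to v_{out}$, and leaves $v_{out}$ (to $t$ if $v$ is a leaf, otherwise to $(c(v))_{in}$), while the other outgoing arcs at a branching vertex simply carry no flow.

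Finally I would compute the cost. Every $v\in U$ lies on exactly one path, so its arc $v_{in}\to v_{out}$ carries exactly one unit of flow and contributes cost $-1$, whereas all source, sink, and connecting arcs have cost $0$; summing gives total cost $-|U|$, as claimed. The step I expect to be the main obstacle is the first one: confirming that the successor-selection really produces exactly $k$ vertex-disjoint paths that each terminate at a leaf and that collectively cover all of $U$. In particular I must check that degree-two vertices of $T$ are absorbed into paths cleanly and that reticulation vertices of $N$ appearing in $T$ cause no difficulty, since each such vertex still has in-degree at most one within the tree $T$. Once this decomposition is established, the feasibility check and the cost computation are routine bookkeeping over the arcs of $F$.
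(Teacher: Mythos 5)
Your proposal is correct and follows essentially the same route as the paper: decompose the covering subtree into $k$ vertex-disjoint root-to-leaf-direction paths by retaining one outgoing arc per non-leaf vertex (the paper phrases this as deleting all but one outgoing arc at each branching vertex), route unit flow along each path from $s$ to $t$, and observe that only the $v_{in}\rightarrow v_{out}$ arcs carry cost, giving total cost $-|U|$. The decomposition step you flagged as the potential obstacle does go through exactly as you argue, since every vertex of $T$ has in-degree at most one within $T$ and the selected-child map forces every maximal path to terminate at a leaf.
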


\begin{proof}
Let $T$ be a {covering subtree} of the network $N = (V,A,L,\rho)$, {and $F=(V',A',s,t,c,w)$ be its flow network}.  Since it is a tree on $k$ leaves, there exists {at most} $k-1$ vertices with out-degree greater than or equal to $2$.  For each of those vertices, delete {all but one of the} outgoing arc to create $k$ disjoint paths $p_1,p_2,\ldots,p_k$.  Let $r_i$ be the vertex of $p_i$ with in-degree $0$ in $p_i$ and let $l_i$ be the vertex of $p_i$ with out-degree $0$ in $p_i$.  Define $f:A' \rightarrow \{0,1\}$ as follows: $f(u,w) = 1$ 
if 
\begin{itemize}
    \itemsep 0pt
    \item $u = v_{in}$ and $w = v_{out}$ and $v$ is a vertex of some path $p_i$, $i=1,\ldots,k$, 
    \item $u = v_{out}$ and $w = v'_{in}$ and $v \rightarrow v'$ is an arc in some $p_i$, $i = 1, ..., k$,     
    \item $u = s$ and $w = (r_i)_{in}$ for some $i=1,\ldots,k$, or
     \item $u = (l_i)_{out}$ 
     {and $w=t$} for some $i=1,\ldots,k$.  
\end{itemize}
For all other arcs, set the $f$ to $0$.
Since the flow $f$ is either $0$ or $1$, it never exceeds the capacity on an arc. 
By construction, for every vertex $v\not\in \{s, t\}$, the flow into an arc equals the flow out of an arc. Thus, $f$ is a flow on the network $F$. 

Furthermore, flow $f$ is only positive on arcs $v_{in} \rightarrow v_{out}$ if $v$ is a vertex in some path $p_i$, $i = 1, ..., k$.  Since the arcs $v_{in} \rightarrow v_{out}$ for $v \in V$ are the only non-zero cost arcs, and each have cost $-1$, the total cost of $f$ is $-|U|$.
\end{proof}

Finally, we show that the minimum-cost flow induces a maximal covering subtree of the phylogenetic network $N$:

\begin{theorem}
\label{thm:maxTree}
{Let $N$ be a phylogenetic network and let $F$ be the corresponding flow network.}
Let $f$ be a minimum-cost flow on flow network $F$, and let $T$ be a {covering subtree of} $N$ as constructed in Lemma~\ref{lem:inducedTree}.  Then tree $T$ is {maximum covering subtree of $N$}.
\end{theorem}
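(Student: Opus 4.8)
The plan is to show that the number of vertices in the constructed tree $T$ is controlled exactly by the cost of the flow $f$, and then to combine the minimality of $f$ with Lemma~\ref{lemma:backwards} to conclude that no covering subtree can have strictly more vertices. The whole argument rests on turning the identity $\mathrm{cost}(f) = -|V(T)|$ into a comparison between $T$ and an arbitrary competing covering subtree.

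First I would compute $\mathrm{cost}(f)$ in terms of $|V(T)|$. The only arcs of $F$ with nonzero cost are the split arcs $v_{in}\rightarrow v_{out}$, each of cost $-1$, so $\mathrm{cost}(f) = -N_f$, where $N_f$ is the number of vertices $v\in V$ with $f(v_{in},v_{out})=1$. I would then argue that $N_f = |V(T)|$. By flow conservation, the only arc leaving $v_{in}$ is $v_{in}\rightarrow v_{out}$ and the only arc entering $v_{out}$ is that same arc; hence $f(v_{in},v_{out})=1$ precisely when positive flow passes through the pair, i.e.\ when $v$ lies on one of the $k$ disjoint paths of $f|_N$ guaranteed by Lemma~\ref{lem:paths}, while a vertex with $f(v_{in},v_{out})=0$ carries no flow at all and is isolated in $f|_N$. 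Since the construction in Lemma~\ref{lem:inducedTree} only adds connecting arcs $u_i\rightarrow r_i$ between vertices already lying on these paths and introduces no new vertices, the vertex set of $T$ is exactly this set of path vertices. Therefore $\mathrm{cost}(f) = -|V(T)|$.

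Next I would establish maximality directly. Let $T'$ be an arbitrary covering subtree of $N$. By Lemma~\ref{lemma:backwards}, $T'$ induces a flow $f'$ on $F$ with $\mathrm{cost}(f') = -|V(T')|$. Because $f$ is a minimum-cost flow, $\mathrm{cost}(f)\le \mathrm{cost}(f')$; combining this with the identity from the previous step yields $-|V(T)|\le -|V(T')|$, that is $|V(T)|\ge |V(T')|$. Since $T'$ was arbitrary, $T$ covers at least as many vertices of $N$ as any covering subtree, so $T$ is a maximum covering subtree.

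The main obstacle is the first step: pinning down $\mathrm{cost}(f) = -|V(T)|$ requires care, because I must confirm that the count of cost-bearing arcs carrying flow in $f$ agrees exactly with the vertex count of the assembled tree. This hinges on the characterization of $V(T)$ from Lemma~\ref{lem:inducedTree} together with the observation that, under an integral unit-capacity flow, $f(v_{in},v_{out})$ is an all-or-nothing indicator of whether $v$ is covered by $f|_N$. Once this bookkeeping is in place, maximality is immediate, since Lemma~\ref{lemma:backwards} embeds every covering subtree as a competing flow and the optimality of $f$ does the rest.
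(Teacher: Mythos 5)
Your proposal is correct and follows essentially the same route as the paper: identify $\mathrm{cost}(f)=-|V(T)|$, invoke Lemma~\ref{lemma:backwards} to turn any competing covering subtree into a flow of cost equal to minus its vertex count, and let the minimality of $f$ finish the argument (the paper phrases this as a contradiction, you phrase it as a direct inequality, which is an immaterial difference). If anything, you are more careful than the paper, which simply asserts that $f$ has cost $-|U|$ without spelling out the bookkeeping that $f(v_{in},v_{out})$ is an indicator of membership in the path cover and hence in $V(T)$.
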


\begin{proof}
By contradiction.  Assume not, then there is a tree, $T'$ that covers more vertices than $T$.  By Lemma~\ref{lemma:backwards}, there exists flow $f'$ that corresponds to $T'$ with cost $-|U'|$, where $U'$ is the set of vertices of $T'$.  Similarly, flow $f$ has cost $-|U|$ where $U$ is the set of vertices of $T$.  But $-|U'| < -|U|$ since $|U| < |U'|$, contradicting the minimal cost of flow $f$.  
\end{proof}

{
We note that the running time is cubic, but running time improvements are possible by assuming additional structure on the underlying network \cite{hayamizu2018}.}
As a corollary to the theorem, we have:

\begin{cor}
Given a phylogenetic network $N$, 
the maximum covering subtree of $N$ and $\eta(N)$ can be computed in polynomial time.
\end{cor}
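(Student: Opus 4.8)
The plan is to read off the corollary as an algorithmic consequence of Theorem~\ref{thm:maxTree}, whose correctness is already established, and to certify that each of the three stages of Section~\ref{sec:encoding} runs in polynomial time. Concretely, I would exhibit an explicit procedure: (i) encode $N$ into the flow network $F$ by node-splitting; (ii) compute a minimum-cost integral flow $f$ on $F$; (iii) build the covering subtree $T$ from $f|_N$ using the construction in Lemma~\ref{lem:inducedTree}. Theorem~\ref{thm:maxTree} then guarantees that the output $T$ is a maximum covering subtree, so it only remains to bound the total work and to extract $\eta(N)$ from $T$.

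For the running times, the encoding in stage (i) touches each arc and vertex of $N$ a constant number of times and introduces $O(n)$ arcs incident to the new source and sink, hence costs $O(m)$. For stage (ii) I would use the successive shortest augmenting path method with Bellman--Ford as the shortest-path subroutine, since the costs $w(v_{in},v_{out}) = -1$ are negative; each Bellman--Ford call runs in $O(nm)$. The number of augmenting iterations is bounded by the number of arcs entering the sink $t$, namely $k = |L| \le n$, because each source-to-sink augmenting path must use a distinct unit-capacity arc $(l_i)_{out} \rightarrow t$. This yields $O(n^2 m)$ for the flow computation, and the existence of an integral optimum is guaranteed by the unit-capacity structure of $F$. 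Stage (iii) restricts $f$ to $N$ to obtain the $k$ disjoint paths of Lemma~\ref{lem:paths} and adds the $k-1$ connecting arcs supplied by Lemma~\ref{lem:parentPaths}, which is $O(m)$. Summing gives $O(n^2 m)$ overall, which is polynomial.

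Finally, once $T$ is in hand, I would compute $\eta(N) = |V(N) \setminus V(T)| = n - |V(T)|$ in $O(n)$ time by a single pass over the vertices, completing the claim for both outputs. The main obstacle here is bookkeeping rather than conceptual: all correctness was discharged by Lemmas~\ref{lem:paths}--\ref{lemma:backwards} and Theorem~\ref{thm:maxTree}, so the only point genuinely requiring care is that the minimum-cost flow terminates after polynomially many augmentations. I would make this precise using the bound $k \le n$ on the number of source-to-sink paths together with the fact that each successive-shortest-path step saturates at least one previously unused arc into $t$, ensuring at most $k$ iterations and hence a polynomial total.
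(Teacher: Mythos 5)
Your proposal is correct and follows essentially the same route as the paper, which justifies the corollary via the three-stage pipeline of Section~\ref{sec:encoding} (an $O(m)$ encoding, at most $k \le n$ Bellman--Ford augmentations at $O(nm)$ each for $O(n^2m)$ total, and an $O(m)$ tree construction) together with Theorem~\ref{thm:maxTree} for correctness. Your added observation that each augmentation saturates a distinct unit-capacity arc into $t$ is exactly the paper's implicit reason for the bound of $k$ iterations, and reading off $\eta(N) = n - |V(T)|$ is immediate.
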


\section{Conclusion}

Francis and Steel \cite{francis2015} introduced the class of tree-based networks that capture the biological question of when can an evolutionary history be modeled by a tree, or tree with a few additional arcs.  They showed that membership in this class can be decided in polynomial time, and in subsequent work with Semple \cite{francis2018}, they proposed measures for how close a network is to being tree-based.  They show that several measures of how close a network is to being tree-based, including the minimum number of leaves that need to be attached to make the network tree-based, can be computed quickly. They left as an open question to compute how many vertices are left uncovered by a maximal covering subtree.  In this paper, we show that this latter measure {can be computed in} polynomial time to compute via a network flow approach for traditional phylogenetic networks as well as the extension to non-binary networks.

\subsection*{Acknowledgements.}

We would like to thank the American Museum of Natural History and the CUNY Advanced Science Research Center for hosting us for several meetings.  This work was funded by a Research Experience for Undergraduates (REU) grant from 
the U.S. National Science Foundation (\#1461094 to St.~John and Owen).  Owen and St.~John and were also partially supported by grants from the Simons Foundation, and Owen by a CAREER grant from the National Science Foundation (\#1847271).

\bibliography{phylo}

\end{document}